\newcounter{section-preserve}
\newcounter{theorem-preserve}
\newcommand{\blank}[1]{}
\newtoks\magicAppendix
\newtoks\magictoks
\newif\iflater
\long\def\later#1{\magictoks={#1}%
  \edef\magictodo{\noexpand\magicAppendix={\the\magicAppendix \par
    \the\magictoks%
  }}
  \magictodo}
\long\def\both#1{\magictoks={#1}%
  \edef\magictodo{\noexpand\magicAppendix={\the\magicAppendix \par
    \noexpand\setcounter{theorem-preserve}{\noexpand\arabic{theorem}}%
    \noexpand\setcounter{theorem}{\arabic{theorem}}%
    \noexpand\setcounter{section-preserve}{\noexpand\arabic{section}}%
    \noexpand\setcounter{section}{\arabic{section}}%
    \noexpand\let\noexpand\oldsection=\noexpand\thesection
    \noexpand\def\noexpand\thesection{\thesection}
    \noexpand\let\noexpand\oldlabel=\noexpand\label
    \noexpand\let\noexpand\label=\noexpand\blank
    \the\magictoks%
    \noexpand\setcounter{theorem}{\noexpand\arabic{theorem-preserve}}%
    \noexpand\setcounter{section}{\noexpand\arabic{section-preserve}}%
    \noexpand\let\noexpand\thesection=\noexpand\oldsection
    \noexpand\let\noexpand\label=\noexpand\oldlabel
  }}
  \magictodo
  \the\magictoks}
\def\magicappendix{\latertrue \the\magicAppendix}
 \gdef\xxxmark{%
   \expandafter\ifx\csname @mpargs\endcsname\relax 
     \expandafter\ifx\csname @captype\endcsname\relax 
       \marginpar{xxx}
     \else
       xxx 
     \fi
   \else
     xxx 
   \fi}
 \gdef\xxx{\@ifnextchar[\xxx@lab\xxx@nolab}
 \long\gdef\xxx@lab[#1]#2{{\bf [\xxxmark #2 ---{\sc #1}]}}
 \long\gdef\xxx@nolab#1{{\bf [\xxxmark #1]}}
\newcommand{\removelatexerror}{\let\@latex@error\@gobble}
\begin{document}

\def \isnotin {\nsubseteq}

\def \eps {\varepsilon}

\title{Upward Partitioned Book Embeddings}

\author{Hugo A. Akitaya\inst{1}
	\and Erik D. Demaine\inst{2}
	\and Adam Hesterberg\inst{2} 
	\and \newline Quanquan C. Liu\inst{2}
}

\institute{Tufts University, Medford, MA, USA\\
	\email{hugo.alves\_akitaya@tufts.edu}
	\and
	Massachusetts Institute of Technology,
	Cambridge, MA, USA\\
	\email{{edemaine,achester,quanquan}@mit.edu}
}

\maketitle

\begin{abstract} 
We analyze a directed variation of the book embedding problem when the page partition is prespecified and the nodes on the spine must be in topological order (upward book embedding).
Given a directed acyclic graph and a partition of its edges into $k$
pages, can we linearly order the vertices such that
the drawing is upward (a topological sort) and each page avoids
crossings?  
We prove that the problem is NP-complete for $k\ge 3$, and for $k\ge 4$ even in the special case when each page is a matching.
By contrast, the problem can be solved in linear time for
$k=2$ pages when pages are restricted to matchings.
The problem comes from Jack Edmonds (1997), motivated as a generalization of
the map folding problem from computational origami.
\end{abstract}

\section{Introduction}
\label{sec:intro}
\textbf{Book Embeddings.}
Bernhart and Keinen~\cite{BK79} first introduced the concept of \emph{book embeddings} and \emph{book thickness} of graphs in 1979. Since then, book embeddings and book thickness have been widely studied as natural geometric invariants in directed and undirected graphs with applications in graph drawing and graph algorithms.
Book embeddings (also studied under the name of \emph{stack layouts} \cite{CLR87,HP97,HPT99,HP99}) have applications in VLSI design, fault-tolerant processing, parallel process scheduling, sorting networks, and parallel matrix computations~\cite{CLR87,HLR92,HR92}. 

Given an undirected graph $G=(V, E)$, where $|V| = n$ and $|E| = m$, a \emph{book embedding} consists of
\begin{enumerate}
\item a linear ordering $\pi$ of the vertices~$V$, defining an embedding of the vertices into the \emph{spine} (a line in the plane); and
\item a disjoint partition of the edges $E$ into sets, so that each set of the partition can be embedded into a \emph{page} (half-plane bounded by the spine) without intersection between the edges on each page.
\end{enumerate}
The pages join together at the spine to form a \emph{book}.
The \emph{book thickness} of a graph $G$ is the minimum number $k$ of pages in any book embedding of~$G$.

Much of the previous research on book embedding (on undirected graphs)
focuses on the book thickness of particular graph classes such as complete bipartite graphs~\cite{ENK97} and planar graphs (which have an upper bound of~$4$ pages)~\cite{yannakakis89,BBKR15,BGR16}. 
Graphs of book thickness $1$ turn out to have a simple characterization as
(exactly) outerplanar graphs \cite{BK79},
and such graphs can be recognized in linear time~\cite{W87}.
By contrast, graphs with a $2$-page book embedding are exactly the sub-Hamiltonian
graphs~\cite{BK79}, and recognizing such graphs is NP-complete~\cite{Wig82}.   

\textbf{Directed Graphs.}
Motivated by many of the same applications,
book embedding has been generalized to directed graphs.
For directed acyclic graphs (DAGs), an
\emph{upward book embedding} is a book embedding such that the linear ordering
of the vertices on the spine is in topological order~\cite{HP99,HPT99}.
For general digraphs, \emph{oriented book embeddings}~\cite{M16} require that all arcs embedded into a page (or the spine) must agree in orientation (pointed up or down with respect to the order on the spine).

Research in upward book embedding includes combinatorial results for
classes of DAGs such as trees, cycles, or paths by using characteristics
of the underlying undirected graph~\cite{HP99}. Furthermore, more recent research studied
the book embedding of directed planar graphs~\cite{FFR11}.
As in the undirected case, there is a linear-time algorithm to determine
whether a DAG has a $1$-page upward book embedding~\cite{HP99}
(although the algorithm is very different from the $1$-page book embedding
algorithm applied to the DAG's underlying undirected graph).
Furthermore, determining whether a DAG has a $6$-page upward book embedding
is NP-complete~\cite{HP99}.
There is a linear-time algorithm for $2$-page upward book embedding of
planar directed series-parallel graphs \cite{GDL02}.
Note that undirected series-parallel graphs are necessarily sub-Hamiltonian.
For graphs with cycles, oriented book embeddings and oriented book thickness have been found for several graph classes including cycles and oriented trees~\cite{M16}. 

See~\cite{DW04,DW05} for more detailed citations lists and surveys about book embeddings and linear graph layouts for both undirected and undirected graphs. 

\textbf{Partitioned Problem.}
In the \emph{partitioned} book embedding problem,
we are \emph{given} the partition of edges into pages.
This variation eliminates one of the previous combinatorial aspects of finding a book embedding (namely finding the partition of edges), and leaves only finding the vertex order on the spine. Intuitively, this problem should be simpler.
Indeed, there is a linear-time algorithm for determining whether a given edge partition can result in a $2$-page book embedding of an undirected graph \cite{HN14,ABB12}.
Nonetheless, the partitioned $k$-page book embedding problem is NP-complete for undirected graphs where $k \geq 3$~\cite{ALN15}. 
In their proof, the gadgets only work with undirected edges and therefore cannot directly be applied to upward book embedding. 

In terms of partitioned book embedding problems where the edges in a partition form a matching,~\cite{Hoske12} showed that for undirected graphs, it is NP-hard to find the find the ordering of vertices given unbounded number of pages (i.e. unbounded number of partitions). Although the reduction in~\cite{Hoske12} is also from \textsc{Betweenness} (defined in Def.~\ref{def:betweenness}), the techniques used are simpler and different from ours since they consider undirected graphs and allow unbounded number of partitions of edges. 

\begin{table}[t]
\def\RESULT#1#2{\begin{minipage}{0.9in}\centering\vspace{0.5ex}{#1}\par{\small #2}\vspace{0.5ex}\end{minipage}}
\def\OPEN{\sl OPEN}
\centering
\begin{tabular}{|c|c|c|c|c|}
\hline
Type & $k=1$ & $k=2$ & $k=3$ & $k\geq 4$
\\ \hline
Undirected & \RESULT{$O(n)$}{\cite{BK79}} & \RESULT{$O(n)$}{\cite{HN14}} & \RESULT{NP-complete}{\cite{ALN15}} & \RESULT{NP-complete}{\cite{ALN15}}
\\ \hline
Upward & \RESULT{$O(n)$}{\cite{HP99}} & \OPEN & \RESULT{\bf NP-complete}{[Theorem~\ref{thm:upbe-np-complete}]} & \RESULT{\bf NP-complete}{[Theorem~\ref{thm:upbe-np-complete}]}
\\ \hline
Matching & \RESULT{$O(n)$}{\cite{HP99}} & \RESULT{\boldmath $O(n)$}{[Theorem~\ref{thm:UMPBE-2}]} & \OPEN & \RESULT{\bf NP-complete}{[Theorem~\ref{thm:UMPBE-4}]}
\\ \hline
\end{tabular}
\caption{Summary of known and new results in partitioned book embedding.  New results are written in bold.}
\label{book-embedding-problems}
\end{table}

\textbf{Our Results.}
In this paper, we study the natural combination of the partitioned variation
(where we are given the partition of edges into pages) with upward book
embedding of DAGs, which has not been considered before (\textsc{Upward Partitioned $k$-Page Book Embedding}).
We prove that the resulting 
problem is NP-complete for any $k \geq 3$.
Our hardness proof techniques also apply to a special case of this problem, called
\textsc{Upward Matching-Partitioned $k$-Page Book Embedding},
where only disjoint edges map to each page (forming a matching).
For this special case, we show NP-hardness for $k \geq 4$ and that book embedding can be solved in linear time for $k=1$ page or $k=2$ pages. 
Table~\ref{book-embedding-problems} puts these results in context with
previous results.

\textsc{Upward Matching-Partitioned $4$-Page Book Embedding} is in fact
motivated by the (nonsimple) map folding problem, posed by Jack Edmonds
in 1997 (personal communication with E. Demaine); see
\cite{MapFolding,2xn}.  
Edmonds showed that the problem of finding a flat
folded state of an $m \times n$ grid crease pattern, with specified mountains
and valleys, reduces to exactly this type of book embedding problem,
with the $k=4$ pages corresponding to the four compass directions of a square.
Furthermore, $1 \times n$ and $2 \times n$ map folding reduce to the problems
with $k=2$ and $k=3$ pages.  
Algorithms for solving
\textsc{Upward Matching-Partitioned $k$-Page Book Embedding} are thus
of particular interest because they solve the long-standing map folding open problem
as well.

In Section~\ref{sec:definitions}, we formally define our book embedding models. 
In Section~\ref{sec:main}, we prove NP-completeness for \textsc{Upward Partitioned $3$-Page Book Embedding}. 
Finally, in Section~\ref{sec:other}, we show that \textsc{Upward  Matching-Partitioned Book Embedding} can be solved in linear time for 2 pages and is NP-complete for 4 pages.

\section{Definitions}
\label{sec:definitions}
We define the \textsc{Upward Partitioned $k$-Page Book Embedding} (\textsc{UPBE-$k$}) problem similarly to the definition for \textsc{Partitioned $k$-Page Book Embedding} as given in~\cite{ALN15}. Specifically, we are given a directed acyclic graph (DAG) $G=(V, E)$ and a partition of the edges in~$E$: $P = \left\{E_1, E_2, \dots, E_k\right\}$ where $E_1 \dot{\cup} E_2 \dot{\cup} \cdots \dot{\cup} E_k = E$, where $\dot{\cup}$ denotes disjoint union.  
The goal is to determine whether $G$ can be embedded in a $k$-page book such that the ordering $\pi$ of the vertices on the spine is topologically sorted and each $E_i \in P$ lies in a separate page.

The \textsc{Upward Matching-Partitioned $k$-Page Book Embedding} (\textsc{UMPBE-$k$}) problem is the special case of \textsc{UPBE-$k$} in which every edge partition $E_i \in P$ forms a directed matching, that it, has at most one edge incident to each vertex in~$G$.

For a given upward partitioned book embedding instance $G=(V, E, P)$, let $\pi$ represent a valid ordering of $V$ on the spine where a valid ordering is one that satisfies the constraints on every page (e.g. non-crossing edges) and follows topological order. 
As stated previously, $\pi$ is also a valid topological sorting of~$V$. 
We write $\pi(x) < \pi(y)$ (resp., $\pi(x) > \pi(y)$) if node $x\in V$ comes earlier/before (resp., later/after) $y\in V$ in $\pi$. 
For ease of wording, we will assign colors to edge partitions and refer to edges within each partition to have a particular color.

\section{UPBE is NP-Complete}
\label{sec:main}
We show that \textsc{UPBE-$k$} is NP-hard via a reduction from the NP-complete problem \textsc{Betweenness}. 
The problem \textsc{Betweenness} is defined as follows. 

\begin{definition}[\textsc{Betweenness}~\cite{O79}]\label{def:betweenness}
We are given a set $L$ of $n$ elements and a set $C$ of $m$ ordered triples where each member of a triple is a member of $L$. 
Let $\phi$ be a total ordering of the elements in $L$.
An ordered triple, $\langle a, b, c \rangle \in C$ is satisfied if either $\phi(a) < \phi(b) < \phi(c)$ or $\phi(c) < \phi(b) < \phi(a)$ is true. 
The goal is to find an ordering $\phi$ such that all ordered triples in $C$ are satisfied. 
\end{definition}

Given an instance $(L, C)$ of \textsc{Betweenness}, we construct an instance, $G=(V, E)$, with edge partition, $P = \left\{\texttt{Red}, \texttt{Green}, \texttt{Blue}\right\}$, of \textsc{UPBE-$3$} such that a subsequence of a solution $\pi$ corresponds to a valid ordering $\phi$ of $L$ in the \textsc{Betweenness} instance.
For each element $x\in L$, we create vertices $x_1,\ldots, x_{2m - 1}$ in $V$. 
We call these vertices the \emph{element vertices}. 
Our reduction uses two types of gadgets: \emph{ordered triple gadgets} and \emph{order preserving gadgets}.
Their function is to enforce a betweenness constraint given by an element of $C$ and to ensure that the order of element vertices of subscript $j$ in $\pi$, with $j\in\{1,\ldots, 2m - 2\}$, is the reverse order of element vertices of subscript $j+1$, respectively.
We prove that $(G,P)$ admits an upward book embedding given by $\pi$ if and only if the order of element vertices of the same subscript in $\pi$  corresponds to a solution $\phi$ for the $(L, C)$ \textsc{Betweenness} instance. 

\subsection{Gadgets}
\label{sec:gadgets}
\noindent\textbf{Ordered Triple Gadget.}
We order the set $C$ arbitrarily.
For the $(\frac{i+1}{2})$-th ordered triple $\langle a, b, c \rangle\in C$, where $i$ is an odd integer between and including $1$ and $2m-1$, we construct an \emph{ordered triple gadget} that enforces the betweenness constraint on the triple of element vertices $a_{i}, b_{i}, c_{i}$ in $\pi$.
Specifically, we create the following nodes and edges.

\begin{definition}[Ordered Triple Gadget]\label{def:ordered-triple}
Let $(L, C)$  be an instance of \textsc{Betweenness}. Order the set $C$ arbitrarily. For the $\left(\frac{i+1}{2}\right)$-th ordered triple $\langle a, b, c\rangle$, where $i$ is an odd integer between and including $1$ and $2m-1$, construct nodes $l_i$, $\alpha_i$, $\omega_i$, $a'_i$, $b'_i$, $c'_i$, and $h_i$. Then, create directed edges $(l_i, \alpha_i), (l_i, \omega_i) \in \mathtt{Red}$, $(\alpha_i, a_i'), (\alpha, b_i'), (\omega_i, b_i'), (\omega_i, c_i') \in \mathtt{Blue}$, and $(a_i', h_i), (b_i', h_i), (c_i', h_i) \in \mathtt{Green}$. 
\end{definition}

Refer to Fig.~\ref{fig:UPBE-3} for an example construction.
Nodes $a_{i}', b_{i}'$ and $c_{i}'$ are respectively connected to $a_{i}'', b_{i}''$ and $c_{i}''$ by an edge in \texttt{Red} (where $a_i''$, $b_i''$, and $c_i''$ are part of the order preserving gadget defined in Def.~\ref{def:order-preserving}).
The topologically earliest and latest nodes in the gadget are respectively $l_i$ and $h_i$.
The choice between $\pi(a'_{i})<\pi(b'_{i})<\pi(c'_{i})$ and $\pi(a'_{i})>\pi(b'_{i})>\pi(c'_{i})$ (and hence the choice between $\pi(a_i) < \pi(b_i) < \pi(c_i)$ and $\pi(a_i) > \pi(b_i) > \pi(c_i)$) is encoded in the choice between 
$\pi(\alpha_{i})<\pi(\omega_{i})$ and $\pi(\alpha_{i})>\pi(\omega_{i})$ as we prove in Lemmas~\ref{lem:ordered-triple} and~\ref{lem:element-vert}.

\begin{figure}[h]
	\centering
	\def\svgwidth{.7\columnwidth}
	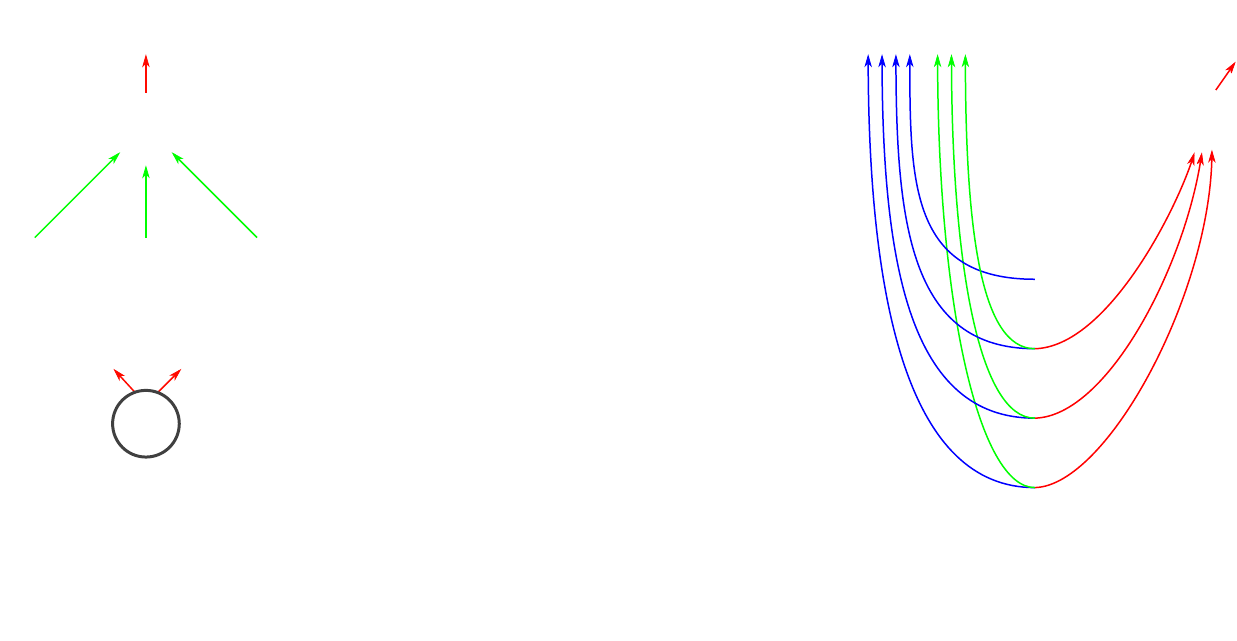
	\caption{Ordered triple gadget (left) and order preserving gadget for odd $i$ (center) and for even $j$ (right). The red edges from $a_i'$, $b_i'$ and $c_i'$ from the ordered triple gadget lead to $a_i''$, $b_i''$, and $c_i''$ in the order preserving gadget. The red edge from $h_i$ directs to $r_i$.}
	\label{fig:UPBE-3} 
\end{figure}

\begin{figure}[h]
\centering
\def\svgwidth{.5\columnwidth}
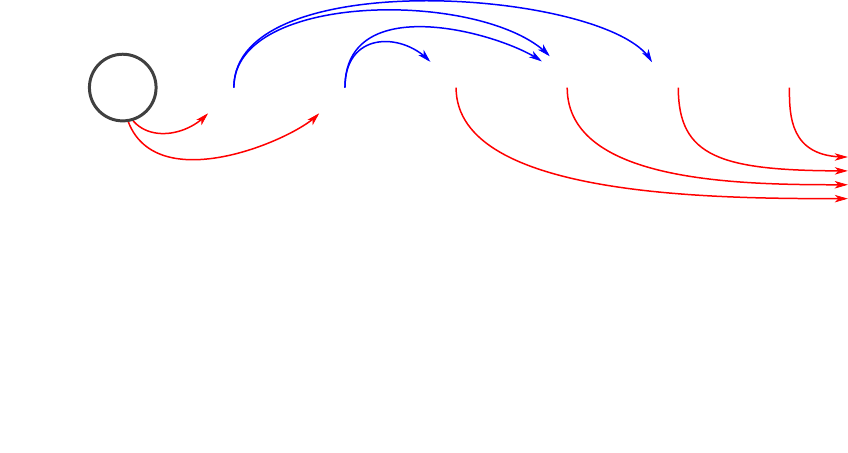
\caption{The two possible embeddings of the ordered triple gadget.}
\label{fig:UPBE-3-order} 
\end{figure}

\begin{lemma}\label{lem:ordered-triple}
Given a positive instance $(G,P)$ containing the ordered triple gadget shown in Fig.~\ref{fig:UPBE-3} (left), if $\pi(h_i)< \min(\pi(a_i''), \pi(b_i''), \pi(c_i''))$ then either $\pi(a_i'') < \pi(b_i'') < \pi(c_i'')$ or $\pi(c_i'') < \pi(b_i'') < \pi(a_i'')$.
\end{lemma}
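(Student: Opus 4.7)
The plan is to reduce the conclusion to showing that the primed vertices $a'_i, b'_i, c'_i$ themselves appear in monotone order on $\pi$, and then transfer that monotonicity to $a''_i, b''_i, c''_i$ through the three red edges $(a'_i, a''_i), (b'_i, b''_i), (c'_i, c''_i)$. The transfer step is short: the green edges force $\pi(a'_i), \pi(b'_i), \pi(c'_i) < \pi(h_i)$, while the hypothesis gives $\pi(h_i) < \pi(a''_i), \pi(b''_i), \pi(c''_i)$, so each of those three red edges straddles the single position $\pi(h_i)$. Non-crossing on the red page then forces the three edges to be pairwise nested, so the order of the doubly-primed endpoints must be the reverse permutation of the order of the primed endpoints; a monotone primed order therefore yields a monotone doubly-primed order.

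For the primed vertices, the main obstacle is that the blue edges $(\alpha_i, a'_i), (\alpha_i, b'_i), (\omega_i, b'_i), (\omega_i, c'_i)$ \emph{alone} do not rule out zig-zag primed orders such as $a'_i, c'_i, b'_i$. The extra leverage I need comes from the red page again, this time from the starter edges $(l_i, \alpha_i)$ and $(l_i, \omega_i)$. The edges $(l_i, \alpha_i)$ and $(c'_i, c''_i)$ are both red, and topological order plus the hypothesis give $\pi(l_i) < \pi(\alpha_i)$, $\pi(\alpha_i) < \pi(h_i) < \pi(c''_i)$, and $\pi(l_i) < \pi(c'_i) < \pi(c''_i)$; the only non-crossing placement then puts $c'_i$ after $\alpha_i$, i.e., $\pi(\alpha_i) < \pi(c'_i)$. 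The symmetric pair $(l_i, \omega_i)$ and $(a'_i, a''_i)$ forces $\pi(\omega_i) < \pi(a'_i)$.

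Armed with $\pi(\alpha_i) < \pi(c'_i)$ and $\pi(\omega_i) < \pi(a'_i)$, I split on the relative order of $\alpha_i$ and $\omega_i$. In the case $\pi(\alpha_i) < \pi(\omega_i)$, all three primed vertices now lie after $\omega_i$, and checking the three non-trivial crossing pairs of blue edges yields $\pi(c'_i) < \pi(b'_i)$ from $(\alpha_i, b'_i)$ versus $(\omega_i, c'_i)$, $\pi(c'_i) < \pi(a'_i)$ from $(\alpha_i, a'_i)$ versus $(\omega_i, c'_i)$, and $\pi(b'_i) < \pi(a'_i)$ from $(\alpha_i, a'_i)$ versus $(\omega_i, b'_i)$, producing the monotone order $\pi(c'_i) < \pi(b'_i) < \pi(a'_i)$. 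The case $\pi(\omega_i) < \pi(\alpha_i)$ is entirely symmetric and yields $\pi(a'_i) < \pi(b'_i) < \pi(c'_i)$. Feeding either outcome into the transfer step of the first paragraph gives exactly the two allowed doubly-primed orders $\pi(a''_i) < \pi(b''_i) < \pi(c''_i)$ and $\pi(c''_i) < \pi(b''_i) < \pi(a''_i)$.
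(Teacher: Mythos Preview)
Your proof is correct and follows essentially the same approach as the paper's: establish that all three primed vertices lie after both $\alpha_i$ and $\omega_i$ via a red-page crossing argument between $(l_i,\alpha_i)$/$(l_i,\omega_i)$ and the outgoing red edges $(x_i',x_i'')$, then use blue-page non-crossing to force a monotone primed order depending on whether $\pi(\alpha_i)<\pi(\omega_i)$ or vice versa, and finally transfer this order to the doubly-primed vertices via the nesting of the three red edges around $h_i$. The only cosmetic difference is that you derive $\pi(\alpha_i)<\pi(c_i')$ and $\pi(\omega_i)<\pi(a_i')$ uniformly before the case split, whereas the paper splits on the order of $\alpha_i,\omega_i$ first and then rules out the one non-topologically-forced inequality in each case.
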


\begin{proof}
Notice that $h_i$ must appear after $a_i'$, $b_i'$ and $c_i'$ due to topological order.
By the assumption in the lemma, $h_i$ be before $a_i''$, $b_i''$, $c_i''$ respectively. 
We first prove that $\min(\pi(a_i'), \pi(b_i'), \pi(c_i')) > \max(\pi(\alpha_i), \pi(\omega_i))$, i.e., no vertex in the gadget must occur between $\alpha_i$ and $\omega_i$. 
Because of the topological order, $\pi(b_i') > \max(\pi(\alpha_i), \pi(\omega_i))$. 
Suppose $\pi(\alpha_i)<\pi(\omega_i)$ (see Figure~\ref{fig:UPBE-3-order} (top)).
By topological order $\pi(c_i') > \pi(\omega_i)=\max(\pi(\alpha_i), \pi(\omega_i))$.
If $\pi(a_i') < \pi(\omega_i)$, since $\pi(\omega_i)<\pi(a_i'')$, 
the red edges $(a_i', a_i'')$ and $(l_i, \omega_i)$ would intersect, a contradiction.
The case when $\pi(\alpha_i)>\pi(\omega_i)$ is symmetric.

Trivially, either $\pi(\alpha_i) < \pi(\omega_i)$ or $\pi(\alpha_i) > \pi(\omega_i)$. 
We first assume $\pi(\alpha_i) < \pi(\omega_i)$ (Fig.~\ref{fig:UPBE-3-order} (top)). 
Then $\pi(c_i') < \pi(b_i') < \pi(a_i')$, or else at least one pair of blue edges ($(\alpha_i, b_i')$, $(\alpha_i,  a_i')$, $(\omega_i, c_i')$, $(\omega_i, b_i')$) would cross.
Since all red edges $(a_i',a_i''), (b_i',b_i'')$ and $(c_i',c_i'')$ nest around $h_i$, every pair of such edges must be nested.
Therefore, $\pi(a_i'') < \pi(b_i'') < \pi(c_i'')$.
The case when $\pi(\alpha_i)>\pi(\omega_i)$ is symmetric. \qed
\end{proof}

\noindent\textbf{Order Preserving Gadget.}
By Lemma~\ref{lem:ordered-triple}, each ordered triple gadget enforces a betweenness constraint on vertices $a_i''$, $b_i''$ and $c_i''$.
The \emph{order preserving gadgets} serve two purposes: ensuring that the $i$-th betweenness constraint is enforced in the $i$-th copy of element vertices; and ensuring that each copy of element vertices must occur in the reverse order of its predecessor.
That implies that every other copy of element vertices occur in exactly the same order.
We build $2m - 1$ order preserving gadgets, the $j$-th gadget containing $x_j$ for each $x\in L$.

\begin{definition}[Order Preserving Gadget]\label{def:order-preserving}
For each odd $i$ in the range $[1, 2m-1]$, we build the following nodes and edges:
\begin{enumerate}
\item Nodes $a_i'', b_i'', c_i''$, $x^p_i$ for $p \in [1, n]$, and $r_i$.
\item Directed edges $(a_i'', a_i), (b_i'', b_i), (c_i'', c_i) \in \mathtt{Red}
$. 
\item Directed paths of length $7$ connecting $r_i$ to $x^p_i$ for all $p \in [1, n]$ and where $x^p_i \neq a_i, b_i, c_i$. The paths alternate between red and green edges. 
\item Directed paths of length $7$ connecting $r_i$ to $a''_i$, $b''_i$, and $c''_i$ with alternating red and green edges. 
\end{enumerate}

For each even $j$ in the range $[1, 2m-1]$, we build the following nodes and edges:
\begin{enumerate}
\item Nodes $x^p_j$ for $p \in [1, n]$ and $r_j$.
\item Directed edges $(x^p_j, r_j) \in \mathtt{Red}$ for all $p \in [1, n]$.
\end{enumerate}
\end{definition}

The gadget is divided into two parts: the \emph{elements part} containing the element vertices, and the \emph{order preserving tree} whose root is labeled $r_i$ or $r_j$.
Fig.~\ref{fig:UPBE-3} (center) shows an example of an order preserving gadget containing element vertices with odd subscript.
Such instances are connected to ordered triple gadgets by three incoming red edges and have the vertex $r_i$ as the lowest vertex in the topological order.
The dashed edges represent a path of length 7 of alternating red/green edges that are connected to the element vertex $x_i$ if $x\in L$ is not in the $\left(\frac{i+1}{2}\right)$-th ordered triple, or connected to the vertex $x_i''$ otherwise.
The vertices $x_i''$ for an element $x$ in the $i$-th ordered triple are then connected to the element vertex $x_i$ by a blue edge.
For gadgets that contain element vertices with even subscript $j$ (Fig.~\ref{fig:UPBE-3} (right)), $r_j$ is the highest vertex in the topological order.
For even $j\in\{2,\ldots,2m-2\}$, we connect $x_j$ to $x_{j-1}$ with a blue edge and $x_j$ to $x_{j+1}$ with a green edge, for all $x\in L$ (see Fig.~\ref{fig:example-full-construction}).

For odd $i$, the order preserving tree consists of $n$ paths of length 7 of alternating red/green edges connected to the $i$-th element vertex (represented in Fig.~\ref{fig:UPBE-3} (center) as dashed arrows).
Informally, their purpose is to allow such paths to ``cross" the vertices connected to the $i$-th ordered triple gadget by red edges, while $r_i$ as the first vertex in the topological order of the order preserving gadget.

\begin{lemma}\label{lem:element-vert}
	Let $(G,P)$ be a positive instance containing an order preserving gadget of odd index $i$ connected to an ordered triple gadget representing $\langle a,b,c\rangle$. 
	If there exists a set of blue edges $(s_{i-1},x_i)$, $\pi(s_{i-1})<\pi(r_i)$ and $\pi(s_{i-1}) < \min(\pi(a_i'', b_i'', c_i''))$, from some vertices $s_{i-1}$ for all $x\in L$, then either $\pi(a_i)<\pi(b_i)<\pi(c_i)$ or $\pi(a_i)>\pi(b_i)>\pi(c_i)$. 
\end{lemma}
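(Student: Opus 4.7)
My plan mirrors the proof of Lemma~\ref{lem:ordered-triple} one layer downstream: first verify the precondition of Lemma~\ref{lem:ordered-triple}, then apply it to get betweenness on $a_i'', b_i'', c_i''$, and finally transfer the betweenness to $a_i, b_i, c_i$ using the three red edges $(a_i'', a_i), (b_i'', b_i), (c_i'', c_i)$ together with the hypothesized blue arcs. The first step is a topological chase: the red edge $(h_i, r_i)$ forces $\pi(h_i) < \pi(r_i)$, and the alternating red/green paths of length~$7$ from $r_i$ to each of $a_i'', b_i'', c_i''$ force $\pi(r_i) < \pi(a_i''), \pi(b_i''), \pi(c_i'')$; composing, $\pi(h_i) < \min(\pi(a_i''), \pi(b_i''), \pi(c_i''))$, so Lemma~\ref{lem:ordered-triple} applies and gives $\pi(a_i'') < \pi(b_i'') < \pi(c_i'')$ or its reverse. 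By symmetry, I work in the former case.

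Next, I examine the three red transfer edges, which share the Red page and must be pairwise non-crossing. With source order $a_i'' < b_i'' < c_i''$ and topological constraints $\pi(a_i) > \pi(a_i'')$, $\pi(b_i) > \pi(b_i'')$, $\pi(c_i) > \pi(c_i'')$, the only allowed placements of the sinks are the $C_3 = 5$ non-crossing rainbow patterns, yielding the orderings
\[
a_i<b_i<c_i,\ c_i<b_i<a_i,\ b_i<a_i<c_i,\ a_i<c_i<b_i,\ b_i<c_i<a_i.
\]
The first two are the desired betweenness orderings, so it suffices to rule out the remaining three.

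To do this, I aim for the sharper auxiliary claim $\pi(c_i'') < \min(\pi(a_i), \pi(b_i), \pi(c_i))$, which eliminates all three bad patterns at once (each of them places at least one of $a_i, b_i, c_i$ strictly before $c_i''$). The $n$ blue arcs $(s_{i-1}^x, x_i)$ hypothesized in the lemma are pairwise non-crossing with all sources strictly before $\pi(a_i'')$; I use this as follows. Suppose for contradiction some $x_i \in \{a_i, b_i, c_i\}$ lies strictly before $c_i''$. Then the blue arc $(s_{i-1}^x, x_i)$ terminates to the left of $c_i''$, while the blue arc $(s_{i-1}^c, c_i)$ terminates to the right of $c_i''$ (since $\pi(c_i) > \pi(c_i'')$ by the red transfer edge). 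A case check on the relative positions of $s_{i-1}^x$ and $s_{i-1}^c$, combined with the blue arcs into the non-triple element vertices $y_i$ that occupy the separating region, produces two blue arcs whose intervals partially overlap, contradicting the Blue-page non-crossing constraint.

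The chief obstacle is making this last paragraph rigorous: concretely identifying, in each of the three bad patterns, a specific pair of blue arcs forced to cross, and in particular handling the $(()())$ pattern (i.e.\ $b_i < c_i < a_i$), whose three triple-sink arcs \emph{alone} can be consistently nested. I expect the argument to require the full family of $n$ blue arcs, not just the three incident to $\{a_i, b_i, c_i\}$, and possibly the red edges of the order preserving tree, mirroring the way Lemma~\ref{lem:ordered-triple} combined blue and red non-crossing constraints to pin down the layout.
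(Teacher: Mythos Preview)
Your first two steps match the paper's proof, but the transfer step is where you diverge. You treat the edges $(a_i'',a_i),(b_i'',b_i),(c_i'',c_i)$ as \texttt{Red} (following item~2 of Definition~\ref{def:order-preserving}); the paper's proof treats them as \texttt{Blue} (as stated in the prose two paragraphs below the definition---the item in the definition appears to be a typo). With blue transfer edges the argument is immediate and the Catalan enumeration is unnecessary: since $\pi(s_{i-1})<\pi(r_i)<\pi(x_i)$ for all $x\in L$ and $\pi(r_i)<\pi(y_i'')$ for $y\in\{a,b,c\}$, the hypothesized blue edges $(s_{i-1},x_i)$ and the blue transfer edges all nest around $r_i$. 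Now assume $\pi(a_i'')<\pi(b_i'')$; if $\pi(b_i)<\pi(a_i)$, then $\pi(s_{i-1})<\pi(a_i'')<\pi(b_i)<\pi(a_i)$, so the blue edges $(a_i'',a_i)$ and $(s_{i-1},b_i)$ partially overlap and cross. The same pairwise argument for $(b,c)$ finishes the proof.

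Under your red reading, the gap you flag is real, and your proposed auxiliary claim $\pi(c_i'')<\min(\pi(a_i),\pi(b_i),\pi(c_i))$ is probably too strong: it also rules out the ``good'' sequential layout $a_i''<a_i<b_i''<b_i<c_i''<c_i$, and neither the three red transfer edges nor the $n$ blue arcs $(s_{i-1},x_i)$ by themselves obstruct that layout (the blue arcs can be nested by choosing the order of their sources). So either the auxiliary claim is simply false, or establishing it requires dragging in the full red/green path structure of the order-preserving tree---precisely the mess the paper's choice of color avoids.
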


\begin{proof}
	By topological order, $\pi(r_i)<\pi(x_i)$ for all $x\in L$.
	Then, all blue edges of the form $(s_{i-1},x_i)$ must nest around $r_i$.
	By Lemma~\ref{lem:ordered-triple}, the order of vertices $a_i''$, $b_i''$ and $c_i''$ must obey the betweenness constraint $\langle a,b,c\rangle$.
	Since $\pi(y_i'')>\pi(r_i), y\in\{a,b,c\}$, if $\pi(a_i'')<\pi(b_i'')$ then $\pi(a_i)<\pi(b_i)$ or else edges $(a_i'', a_i)$ and $(s_{i-1}, b_i)$ would cross.
	With similar arguments, we can show that the order of the $i$-th element vertices must obey the betweenness constraint $\langle a,b,c\rangle$.\qed
\end{proof}

\begin{lemma}\label{lem:preserving-gadget}
Let $(G,P)$ be a positive instance containing three subsequent order preserving gadgets with indices $j-1$, $j$, and $j+1$ where $j$ is an even integer in $\left\{2, \dots, 2m - 2\right\}$. 
If $\pi(r_{j}) < \min\{\pi(r_{j-1}), \pi(r_{j+1})\}$, the element vertices with subscript $j+1$ appear in $\pi$ in the same order as the element vertices with subscript $j-1$. 
\end{lemma}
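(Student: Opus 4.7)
The plan is to run a nesting argument on the edges that are forced to straddle $r_j$ on the spine.

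First, I would pin down the positions of the relevant element vertices relative to $r_j$. For every $x \in L$, the red edge $(x_j, r_j)$ inside the $j$-th order preserving gadget gives $\pi(x_j) < \pi(r_j)$. On the other side, because $j-1$ and $j+1$ are odd, the construction makes $r_{j-1}$ and $r_{j+1}$ the topologically earliest vertices of their respective gadgets; more importantly, there are directed paths from $r_{j-1}$ (resp.\ $r_{j+1}$) to every element vertex of subscript $j-1$ (resp.\ $j+1$), either directly or through $a_i'', b_i'', c_i''$ followed by $(y_i'', y_i)$. Hence $\pi(r_{j-1}) < \pi(x_{j-1})$ and $\pi(r_{j+1}) < \pi(x_{j+1})$ for every $x \in L$. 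Combining with the hypothesis $\pi(r_j) < \min\{\pi(r_{j-1}),\pi(r_{j+1})\}$ yields
\[
\pi(x_j) \;<\; \pi(r_j) \;<\; \pi(x_{j-1}) \quad\text{and}\quad \pi(x_j) \;<\; \pi(r_j) \;<\; \pi(x_{j+1})
\]
for every $x \in L$.

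Next, I would exploit that the blue page (and separately the green page) has no crossings. Consider any two distinct elements $x,y \in L$ and the blue edges $(x_j, x_{j-1})$ and $(y_j, y_{j-1})$. By the previous step, both edges strictly straddle $r_j$, so neither can lie entirely to one side of the other on the spine; thus they must be nested. That is, if $\pi(x_j) < \pi(y_j)$, then necessarily $\pi(y_{j-1}) < \pi(x_{j-1})$, for otherwise the pair interleaves. Consequently, the linear order along $\pi$ of the element vertices with subscript $j-1$ is precisely the reverse of their order with subscript $j$. The identical argument applied to the green edges $(x_j, x_{j+1})$ gives that the order with subscript $j+1$ is also the reverse of the order with subscript $j$.

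Since both the subscript-$(j-1)$ and subscript-$(j+1)$ sequences are the reversal of the same subscript-$j$ sequence, they agree, proving the lemma.

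The only subtlety I expect is making sure that no other construction edge or path vertex spoils the straddling argument; this reduces to checking that every element vertex with subscript $j-1$ or $j+1$ genuinely lies after $r_j$ in $\pi$, which is exactly what the path-to-$r_{j\pm1}$ plus the hypothesis guarantees. Everything else is just the standard two-line nesting fact for same-page edges that all cross a common spine point.
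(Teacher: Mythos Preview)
Your argument is correct and follows essentially the same route as the paper: you establish $\pi(x_j)<\pi(r_j)<\pi(x_{j\pm1})$ from the topological order imposed by the gadget edges together with the hypothesis, then invoke the standard nesting fact for same-color edges that all straddle $r_j$ to force the subscript-$(j-1)$ and subscript-$(j+1)$ orders each to be the reversal of the subscript-$j$ order. The paper's proof is the same argument, phrased slightly more tersely (it simply cites ``topological order'' for the straddling inequalities rather than tracing the paths through $r_{j\pm1}$ as you do).
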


\begin{proof}
We prove that the order of the element vertices with subscript $j-1$ is the reverse order of the element vertices with subscript $j$ in $\pi$.
By a similar argument, we can then show the same for $j$ and $j+1$, completing the proof.
Notice that, since $j$ is even,  $\pi(x_{j}) < \pi(r_{j})$, $\pi(r_j) < \pi(r_{j-1})$, and $\pi(r_{j-1})<\pi(x_{j-1})$ for all $x\in L$ due to the topological order of the vertices.
By the assumption in the lemma, all blue edges $(x_j,x_{j-1})$ nest around $r_j$ and $r_{j-1}$.
Therefore, any pair of such edges must be nested or they would cross. 
Hence, if $\pi(b_{j}) > \pi(a_{j})$, then $\pi(b_{j-1})<\pi(a_{j-1})$ or the blue edges $(b_j, b_{j-1})$ and $(a_j, a_{j-1})$ would cross (see Fig.~\ref{fig:example-full-construction} for example). Therefore, the order of the $(j-1)$-th copy of element vertices is the reverse order of the $j$-th copy. The same argument holds for $x_j$ and $x_{j+1}$ for all $x \in L$. Given that the order of the $(j-1)$-th and the order of the $(j+1)$-th
copy of the element vertices are in reverse order of the $j$-th copy, the order of the $(j-1)$-th and $(j+1)$-th copies of the element vertices must be the same. \qed
\end{proof}

The next corollary immediately follows from Lemma~\ref{lem:preserving-gadget}. 

\begin{corollary}\label{cor:preserving-same}
If $\pi(r_{j}) < \min\{\pi(r_{j-1}), \pi(r_{j+1})\}$ for all even $j \in \{2, \dots, 2m-2\}$, then all element vertices with even subscript, $j \in \{2, \dots, 2m-2\}$ appear in the same order and all element vertices with odd subscript, $i \in \{1, \dots, 2m-1\}$ appear in the same order. Furthermore, all element vertices with even subscript, $j$, appear in the reverse order of all element vertices with odd subscript, $i$.
\end{corollary}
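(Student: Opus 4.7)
The plan is to derive the corollary by a direct chaining argument on top of Lemma~\ref{lem:preserving-gadget}, together with one auxiliary fact that is actually established inside that lemma's proof, namely: whenever $j$ is even with $\pi(r_j)<\min\{\pi(r_{j-1}),\pi(r_{j+1})\}$, the $j$-th copy of the element vertices appears in $\pi$ in the \emph{reverse} order of the $(j-1)$-th copy (and symmetrically of the $(j+1)$-th copy). I would either cite this directly from the proof of Lemma~\ref{lem:preserving-gadget}, or, to keep the corollary self-contained, re-state it as an intermediate claim before applying it.

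First I would handle the odd subscripts. Applying Lemma~\ref{lem:preserving-gadget} at each even $j\in\{2,4,\dots,2m-2\}$ gives that the element vertices with subscript $j-1$ occur in the same order as those with subscript $j+1$. Chaining these equalities yields a single common order for subscripts $1,3,5,\dots,2m-1$. This is the first conclusion of the corollary.

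Next I would handle the even subscripts. For any even $j$, the intermediate fact from the proof of Lemma~\ref{lem:preserving-gadget} says that the order at subscript $j$ is the reverse of the order at subscript $j-1$. Since all odd subscripts have already been shown to share a common order, say $\sigma$, every even subscript $j$ must appear in the order $\sigma$ reversed. Hence all even subscripts $j\in\{2,4,\dots,2m-2\}$ share a common order, and that common order is the reverse of the common order on odd subscripts, yielding the remaining two conclusions of the corollary.

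I do not anticipate a real obstacle, since the argument is pure bookkeeping once Lemma~\ref{lem:preserving-gadget} is in hand; the only subtlety is that the lemma statement as phrased only delivers the ``same order'' conclusion between $j-1$ and $j+1$, so I would be careful to explicitly invoke (or re-derive in one line via the same nested-blue-edges argument used in the lemma's proof) the ``reverse order'' relation between consecutive subscripts in order to obtain the last sentence of the corollary.
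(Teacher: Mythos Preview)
Your proposal is correct and matches the paper's approach: the paper simply states that the corollary ``immediately follows from Lemma~\ref{lem:preserving-gadget}'' without spelling out the details, and your chaining argument together with the reverse-order fact extracted from that lemma's proof is exactly the intended derivation. Your care in noting that the reverse-order relation between consecutive subscripts must be pulled from the \emph{proof} of Lemma~\ref{lem:preserving-gadget} (rather than its statement) is appropriate, since that is precisely where the paper establishes it.
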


\subsection{Final Reduction}\label{sec:final-construct}

We create $n$ ordered triple gadgets as defined in Def.~\ref{def:ordered-triple} and $2m-1$ order preserving gadgets as defined in Def.~\ref{def:order-preserving}, connecting via the following set of edges:

\begin{enumerate}
\item $(a'_i, a''_i), (b'_i, b''_i), (c_i, c''_i) \in \mathtt{Red}$ for all odd $i \in [1, 2m-1]$. 
\item $(r_j, l_{j-1}), (r_j, l_{j+1}) \in \mathtt{Red}$ for all even $j \in [1, 2m-1]$.
\item $(s, x^p_n) \in \mathtt{Blue}$ for all $p \in [1, n]$.
\item $(x^p_j, x^p_{j-1}) \in \mathtt{Blue}$ for all $p \in [1, n]$ and even $j \in [1, 2m-1]$. 
\item $(x^p_j, x^p_{j+1}) \in \mathtt{Green}$ for all $p \in [1, n]$ and even $j \in [1, 2m-1]$. 
\item $(h_i, r_i) \in \mathtt{Red}$ for all odd $i \in [1, 2m-1]$. 
\item $(x^p_j, r_j) \in \mathtt{Red}$ for all $p \in [1, n]$ and even $j \in [1, 2m-1]$.
\item $(r_{2m-2}, s), (s, l_{2m-1}) \in \mathtt{Blue}$.
\item $(r_{2m-2}, l_{2m-1}) \in \mathtt{Red}$.
\end{enumerate}

We connect the ordered triple and order preserving gadgets as described above, obtaining an instance $(G,P)$ of \textsc{UPBE-$k$}. Using this reduction, we prove that UPBE-$k$ is NP-complete for $k \geq 3$.

\both{
\begin{theorem}\label{thm:upbe-np-complete}
\textsc{UPBE-$k$} is NP-complete for $k \geq 3$.
\end{theorem}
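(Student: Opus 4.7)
The plan is to establish membership in NP by noting that a linear order $\pi$ of $V$ is a polynomial-size certificate whose validity (topological consistency plus planarity on each of the three pages) can be checked in $O(|V|+|E|)$ time. For NP-hardness I would prove that the reduction constructed in Section \ref{sec:final-construct} is correct, in both directions, for $k=3$; the extension to $k\geq 4$ is then immediate by declaring the additional pages empty, so any $3$-page embedding is also a $k$-page embedding.

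For the forward direction (\textsc{Betweenness} yes $\Rightarrow$ \textsc{UPBE-$3$} yes), given a satisfying order $\phi$ of $L$ I would explicitly exhibit a valid $\pi$. The spine is split into $2m-1$ blocks, one per order-preserving gadget, laid out in topological order. In each odd block $i$, the element vertices $\{x_i : x\in L\}$ are placed in the order $\phi$, flanked by the ordered-triple-gadget vertices $l_i,\alpha_i,\omega_i,a_i',b_i',c_i',h_i$ and the tree root $r_i$; the internal choice between $\pi(\alpha_i)<\pi(\omega_i)$ and $\pi(\alpha_i)>\pi(\omega_i)$ is made to match the orientation of $\langle a,b,c\rangle$ under $\phi$. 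In each even block $j$, the element vertices are placed in the reverse of $\phi$, with $r_j$ last. A page-by-page check (red spine chains, blue nesting across $r_j$, green between consecutive blocks) then verifies non-crossing.

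For the reverse direction, given any valid $\pi$ for $(G,P)$ I would verify that the structural edges (items 1--9 of Section \ref{sec:final-construct}) force the hypotheses of the lemmas already proved. Specifically, the red edges $(h_i,r_i)$, $(r_j,l_{j\pm 1})$, and $(x^p_j,r_j)$, together with topological order, force $\pi(r_j)<\min(\pi(r_{j-1}),\pi(r_{j+1}))$ for every even $j$ and $\pi(h_i)<\min(\pi(a_i''),\pi(b_i''),\pi(c_i''))$ for every odd $i$; the blue edges $(x^p_j,x^p_{j-1})$, combined with their common nesting around $r_j$, give the hypothesis of Lemma \ref{lem:element-vert}. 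Applying Lemma \ref{lem:element-vert} to each odd block then shows that the $i$-th copy of element vertices obeys the $i$-th betweenness constraint, and Corollary \ref{cor:preserving-same} shows that all odd copies agree on a single order $\phi$. This $\phi$ is therefore a satisfying assignment for $(L,C)$.

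The main obstacle, and where I would spend the most care, is the bookkeeping needed to show that the length-$7$ alternating red/green paths of the order-preserving tree can indeed be routed around the ``obstacle'' vertices $a_i'',b_i'',c_i''$ without forcing extraneous orderings on the element vertices: one must show that such a path has enough flexibility to nest through $r_i$ and return, yet rigid enough that $r_i$ is pinned to the extreme end of its block in the topological order. Once the routing of these trees is established for both the forward construction and the reverse analysis, the remaining crossing-checks reduce to the three lemmas already in hand, and the theorem follows.
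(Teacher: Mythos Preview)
Your proposal is correct and follows essentially the same route as the paper's proof: NP membership via the linear-order certificate, the forward direction by exhibiting an explicit block-by-block embedding keyed to~$\phi$, and the reverse direction by showing that the connecting edges of Section~\ref{sec:final-construct} force the hypotheses of Lemmas~\ref{lem:ordered-triple}--\ref{lem:preserving-gadget} and Corollary~\ref{cor:preserving-same}. The only cosmetic difference is that the paper uses~$\phi$ on the even copies and the reverse of~$\phi$ on the odd copies (you do the opposite), which is harmless since betweenness is reversal-symmetric; and the paper, like you, singles out the routing of the length-$7$ alternating red/green paths through the $x_i''$ vertices (its Figure~\ref{fig:embedding-order-gadget}) as the one place requiring real care in the forward construction.
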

}

The proof follows from Lemmas~\ref{lem:ordered-triple}, \ref{lem:element-vert}, and~\ref{lem:preserving-gadget} and the constructions of the gadgets; please refer to the Appendix for the full proof. 

\later{
\begin{proof}
\textsc{UPBE-$k$} is in NP since an ordering $\pi$ has $O(n)$ size. Whether it is a valid ordering can be checked in $O(n)$ time.
We now show NP-hardness.
Given an instance of \textsc{Betweenness}, $(L, C)$, the constructed graph $(G, P)$ as defined in our reduction in Section~\ref{sec:final-construct} is a positive instance of \textsc{UPBE-$k$} if and only if $(L, C)$ is a positive instance of \textsc{Betweenness}. 	
The topological order of $G$ ensures that the conditions in Lemmas \ref{lem:ordered-triple}, \ref{lem:element-vert}, \ref{lem:preserving-gadget} and Corollary~\ref{cor:preserving-same} are always met.
By these lemmas, if $(G,P)$ admits a valid ordering $\pi$, the ordering of a copy of element vertices corresponds to a valid ordering $\phi$ of the \textsc{Betweenness} instance $(L,C)$. 

We now show that if $(L,C)$ admits a valid ordering $\phi$, we can obtain a valid ordering $\pi$ for $(G,P)$.
Refer to Figure~\ref{fig:example-full-construction}.
We order all element vertices of even indices using the same order of $\phi$.
For even $j$, set $\min\{\pi(x_j)\}=\frac{j-2}{2}(n+1)$ and $\pi(r_j)=\frac{j}{2}(n+1)-1$. 
Set $\pi(s)=(m-1)(n+1)+1$.
We now embed the vertices with odd indices $i$ starting from $2m-1$ to $1$.
We embed all the vertices of the $i$-th ordered triple gadget before the corresponding order preserving gadget.
All element vertices should occur in the reverse order of $\phi$.
Embed the ordered triple gadget as in Figure~\ref{fig:UPBE-3-order} so that the order of $a_j', b_j'$ and $c_j'$ is the same as $a,b$ and $c$ in $\phi$.
Embed the order preserving gadget as shown in Figure~\ref{fig:embedding-order-gadget} where a dashed arrow to $x_i''$ represents a path whose vertices appear subsequently in $\pi$ right before $x_i''$.
The paths from $r_i$ must be realized so that they are parallel appearing from bottom to top (using the orientation of Figure~\ref{fig:embedding-order-gadget}) in the same order that its corresponding element vertex appear in $\pi$.
Hence, no paths pairwise cross.
A red and a green edge is used to ``go through" each $x_i''$ that occur before the path's endpoint.
Such an embedding allows the alternating red/green paths to be realized so that no crossing occur with the three red edges coming from the ordered triple gadget.
By construction, $\pi$ is set so that no two edges of the same color between element vertices and vertices in ordered triple gadgets of the same index cross.
Furthermore, the blue (resp. green, resp. red) edges connecting element vertices of different indices will be all nested and will not pairwise cross (see Figure~\ref{fig:example-full-construction}) since all ordered triple gadgets are satisfied by the ordering of the odd order preserving gadgets (which are all ordered the same and are the reverse order of all even order preserving gadgets). \qed
\end{proof}

\begin{figure}[h!]
	\centering
	\def\svgwidth{.7\columnwidth}
	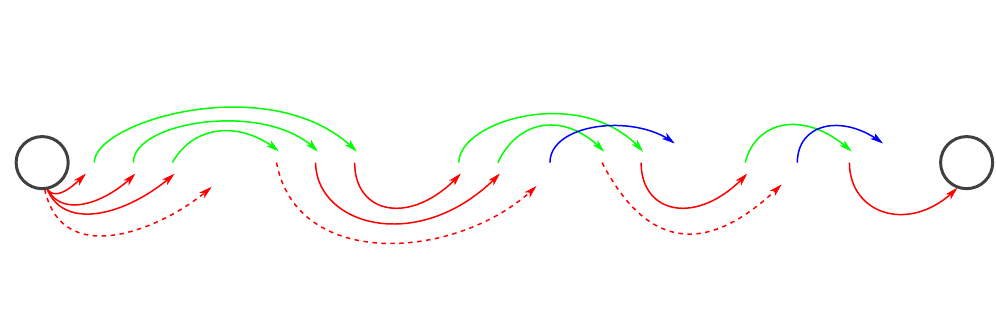
	\caption{An embedding of the order preserving gadget shown in Fig.~\ref{fig:UPBE-3} (center).}
	\label{fig:embedding-order-gadget}
\end{figure}
}

\section{\textsc{UMPBE}}\label{sec:other}

In this section, we discuss the \textsc{Upward  Matching-Partitioned Book Embedding} problem, where given an instance $(G,P)$, each set of the partition $P$ induces a subgraph in $G$ that is a matching (i.e. no vertex is incident to more than one edge of each set of the partition).
We first show \textsc{UMPBE}-$4$ is NP-complete and then show that \textsc{UMPBE}-$2$ is solved in $O(n)$ time.
When $|P|=1$, the algorithm in~\cite{HP99} for \textsc{UPBE}-$1$ can also solve \textsc{UMPBE}-$1$ in $O(n)$ time.

\subsection{\textsc{UMPBE}-$4$}

\begin{theorem}
\label{thm:UMPBE-4}
	\textsc{UMPBE}-$k$ is NP-complete for $k\ge 4$.
\end{theorem}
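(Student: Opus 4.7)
The plan is to reduce from \textsc{Betweenness}, mirroring the construction behind Theorem~\ref{thm:upbe-np-complete}, but refining the gadgets so that each of the four color classes forms a matching. NP membership is immediate, since a candidate ordering $\pi$ of the spine is a polynomial certificate and validity (topological order plus per-page non-crossing) can be checked in $O(n+m)$ time; so the whole effort goes into NP-hardness.

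The first step is to audit the \textsc{UPBE}-$3$ construction and list every vertex that meets two edges of the same color. In the ordered triple gadget the violations are: $l_i$ has two \texttt{Red} out-edges; $\alpha_i$ and $\omega_i$ each have two \texttt{Blue} out-edges, and $b_i'$ has two \texttt{Blue} in-edges; and $h_i$ has three \texttt{Green} in-edges. In the order preserving gadget the root $r_i$ (odd $i$) or $r_j$ (even $j$) is the root of a high-degree tree whose many branches share \texttt{Red}, and several connecting edges (the three $(a_i',a_i''),(b_i',b_i''),(c_i',c_i'')$, and $(r_j,l_{j\pm1})$) add further \texttt{Red} coincidences.

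Once the violations are catalogued, the plan is to repair them with an extra fourth color (say \texttt{Yellow}) and with edge subdivisions (degree-$2$ dummies whose two incident edges are assigned different colors). Concretely: at each offending vertex, I would keep one incident edge of the original color and reassign the remaining edges to \texttt{Yellow}; where that would merely move the conflict to another vertex, I would insert a short alternating-color path so that the subdivision vertices have degree one in each color. The high-degree \texttt{Red} star at $r_i$ should be replaced by a chain or balanced decomposition of degree-$1$-per-color pieces; the role of this tree in the original proof was only to let $n$ ``crossing corridors'' emanate from a single topologically lowest vertex, and that role is preserved by any subdivision that keeps the tree's root at the bottom of the topological order and its leaves incident to the right element vertices. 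After this, the analogues of Lemmas~\ref{lem:ordered-triple}, \ref{lem:element-vert}, \ref{lem:preserving-gadget}, and Corollary~\ref{cor:preserving-same} should be re-proved for the modified matching gadgets, and the final reduction (Section~\ref{sec:final-construct} analogue) is then assembled unchanged.

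The main obstacle will be the matching analogue of Lemma~\ref{lem:ordered-triple}. In the original argument, the key step is that the two \texttt{Blue} edges into $b_i'$ and the adjacent \texttt{Blue} edges out of $\alpha_i$ and $\omega_i$ \emph{sharing endpoints} are what force $b_i'$ to lie between the images of $\alpha_i$ and $\omega_i$ and to sandwich $a_i',b_i',c_i'$ in the correct order. When every color is a matching, we lose this shared-endpoint leverage and must recover it by either subdividing $b_i'$ into parallel copies linked by \texttt{Yellow} edges or by using \texttt{Yellow} edges to emulate the forbidden crossings; the delicate part is ensuring the resulting gadget still admits \emph{exactly} the two mirror-image embeddings and no parasitic one. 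Once this is pinned down, the rest of the reduction (order preserving gadgets, global chaining, equivalence with \textsc{Betweenness}) carries over essentially verbatim from the proof of Theorem~\ref{thm:upbe-np-complete}, yielding NP-completeness for all $k \ge 4$ (any extra pages can be left empty).
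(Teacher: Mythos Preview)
Your plan is essentially the paper's own approach: reduce from \textsc{Betweenness} by adapting the \textsc{UPBE}-$3$ gadgets, introduce a fourth color \texttt{Yellow} to dissolve the matching violations, replace the high-degree star at $r_i$ by a color-alternating tree, and re-establish Lemmas~\ref{lem:ordered-triple}--\ref{lem:preserving-gadget}. The paper's concrete realization of your ``chain or balanced decomposition'' is a \emph{binary} order-preserving tree alternating between \texttt{Green}/\texttt{Blue} and \texttt{Yellow}/\texttt{Red} levels, together with length-$n$ alternating-color paths from the leaves $x_j^*$ to the element vertices $x_j$; the positive-direction embedding of these paths is then argued by an insertion-sort style sweep, which is the one piece of genuine new work your sketch defers.
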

\begin{proof}

As with \textsc{UPBE}, it is clear that \textsc{UMPBE} is in NP since an order $\pi$ of vertices of $G$ serves as a certificate.
We show NP-hardness by reducing from \textsc{Betweenness}, adapting the proof of Theorem~\ref{thm:upbe-np-complete} to \textsc{UMPBE}-$4$.
We again refer to the partitions in $P=\{\texttt{Red}, \texttt{Blue},\texttt{Green},\texttt{Yellow}\}$ as colors.
The gadgets adapted from Section~\ref{sec:gadgets} are shown in Figure~\ref{fig:UMPBE-gadget}.

\begin{figure}[h!]
	\centering
	\def\svgwidth{\columnwidth}
	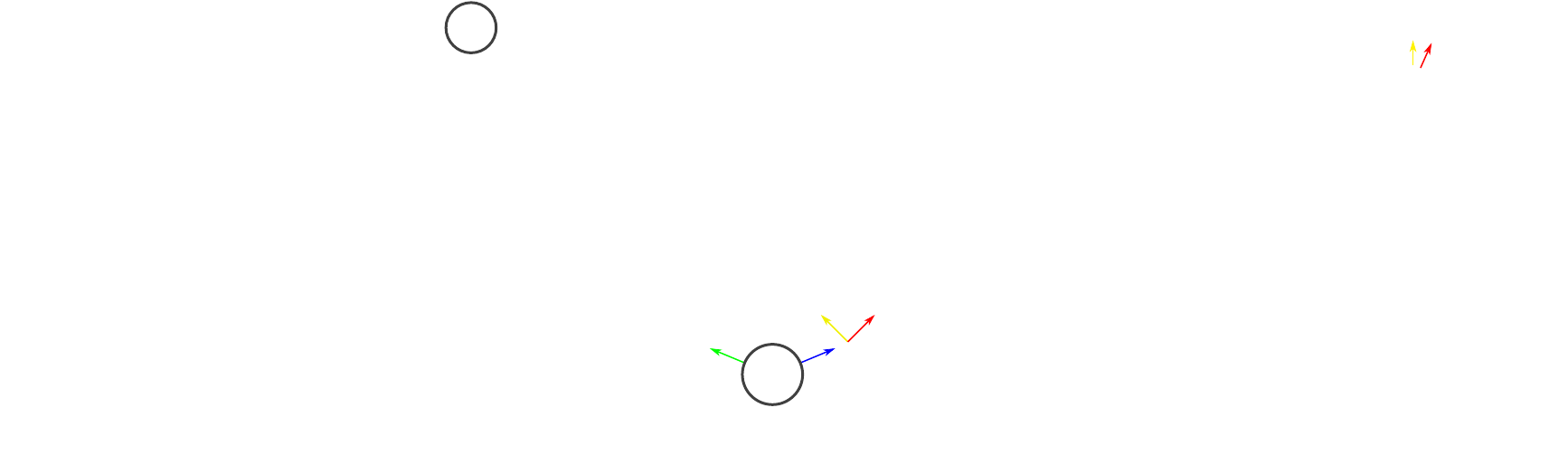
	\caption{Gadgets for the reduction to \textsc{UMPBE}-$4$.}
	\label{fig:UMPBE-gadget}
\end{figure}

For odd $i$ in $\{1,\ldots,2m-1\}$ we connect gadgets with yellow edges $(h_i, r_i)$ and $(r_{i-1},l_i)$ (if $i>1$), and with the red edge $(r_{i+1},l_i)$ (if $i<2m-1$).
Lemmas~\ref{lem:ordered-triple} holds for the new gadget replacing $x_i''$ by $x_i$.
We omit its proof due to the similarity.
The dashed arrows in Figure~\ref{fig:UMPBE-gadget} represent paths of alternating colors as described in the next paragraph.
Lemma~\ref{lem:preserving-gadget} also trivially holds.
Therefore, given a valid order $\pi$ of vertices of $G$, the order of element vertices corresponds to a solution $\phi$ of the \textsc{Betweenness} instance.

It remains to show that, given a solution $\phi$ of the \textsc{Betweenness} instance, we can obtain a solution $\pi$ for the produced instance.
The order in which the gadgets are embedded are the same as in the proof of Theorem~\ref{thm:upbe-np-complete} and, therefore, no edge between gadgets cross.
We now show that each gadget has a cross-free embedding using $\phi$.
The embedding of the ordered triple gadget is very similar to that shown in Figure~\ref{fig:UPBE-3-order} and we chose $\pi(\alpha_i)>\pi(\omega_{i})$ or $\pi(\alpha_i)<\pi(\omega_{i})$ depending on whether $a$ appears before $c$ or vice-versa in $\phi$.
In the order preserving gadget, we use the same order (resp., reverse order) of $\phi$ for even (resp., odd) $j$.

\begin{figure}[h!]
	\centering
	\def\svgwidth{.7\columnwidth}
	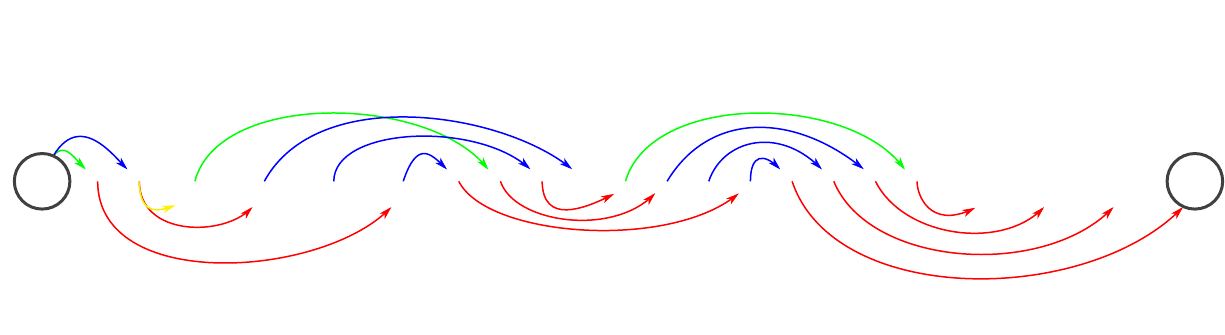
	\caption{Embedding the order preserving gadget in Figure~\ref{fig:UMPBE-gadget} (center).}
	\label{fig:UMPBE-embedding}
\end{figure}

Notice that the order preserving gadget now contains a binary tree and we cannot chose arbitrarily the order of vertices $x_j^*$ for $x\in L$. We call this tree the \emph{binary order preversing tree} that ensures that $\pi(x_i) > \pi(r_i)$ and $\pi(x_j) < \pi(r_j)$ for all $x \in L$ for odd $i$ and even $j$.
The tree is obtained by ordering the vertices $x_j^*$ arbitrarily and building a binary tree that alternates between green/blue and yellow/red edges in order for the induced subgraph of each color to be a matching.
The paths connecting $x_j^*$ to $x_j$ allows us to order the element vertices $x_j$ using $\phi$, independent of the order of vertices $x_j^*$.
We construct such paths in the following way. Each of the paths contains $n$ edges. 
These paths alternate between green/blue and yellow/red edges starting with the opposite color group from the last row of the tree.
Let $x_j^*$ be the $t$-th vertex, $t\in\{1,\ldots,n-1\}$,  in an arbitrary order chosen as the order of the leaves of the order preserving tree.
The colors of edges alternate between blue and red except for the $t$-th edge of the $t$-th vertex, such that the edge is green (resp., yellow) if the alternation would make it blue (resp., red).
The embedding of the paths can be obtained by changing the order of the paths in an insertion sort manner, considering the last path (the path from $d_j^*$ to $d_j$ in Figure~\ref{fig:UMPBE-embedding}) as the first element in the array and adding paths one by one in increasing order (see Fig.~\ref{fig:UMPBE-embedding}). Let $x_j^{t}$ be the $t$-th vertex of the path from $x_j^*$ to $x_j$ and let $y_j^t$ be the $t$-th vertex, $t\in\{1,\ldots,n-1\}$,  in the order chosen as the order of the leaves of the order preserving tree.
Assume that the set $A$ of  all $x_j^{t-1}$ is embedded so that the vertices are contiguous in the spine.
We embed the set $B$ of vertices of the form $x_j^{t}$ immediately after the vertices in $A$ in the reverse order in which they appear in $A$, apart from $y_j^{t}$.
This order guarantees that no crossing is induced since all edges of the same color $(x_j^{t-1}, x_j^{t})$ are nested in parallel from $A$ except for $(y_j^{t-1}, y_j^{t})$ which is of a different color.
We can thus add $y_j^{t}$ in any place in the ordering of $B$. \qed
\end{proof}

\subsection{\textsc{UMPBE}-$2$}

\begin{theorem}
\label{thm:UMPBE-2}
	Given an instance $(G,\{E_1,E_2\})$ where $G=(V,E)$, $E=E_1 \dot{\cup} E_2$ and both $(V,E_1)$ and $(V,E_2)$ are matchings, \textsc{UMPBE} can be solved in $O(n)$ time where $n=|V|$.
\end{theorem}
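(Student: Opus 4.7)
The plan has three main steps.

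First, a structural reduction: since $E_1$ and $E_2$ are both matchings, every vertex of $G$ has degree at most $2$, so the underlying undirected graph of $G$ decomposes into a disjoint union of simple paths and simple cycles, along each of which the two colors must strictly alternate (otherwise some vertex would have two incident edges of the same color).

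Second, I argue that components are independent: no edges link distinct components, so concatenating per-component orderings in any order preserves both topological sortedness (no inter-component edge can be violated) and the non-crossing condition on each page (same-color edges of different components occupy disjoint spine intervals in such a concatenation, hence cannot cross). It therefore suffices to give a linear-time algorithm for a single path or cycle component.

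Third, the per-component algorithm. For a path component with vertices $v_1, v_2, \ldots, v_k$ (in path order) and arbitrary edge directions, I plan to sweep along the path and, at each vertex, decide its placement relative to the already-placed vertices from a constant number of candidate positions (for example, the current leftmost or rightmost of the spine). The direction of the incident path edge handles the topological constraint, while maintaining a short summary of the already-placed same-color arcs (their current ``nesting front'') lets us verify the non-crossing constraint in $O(1)$ per step. For a cycle component, since $G$ is a DAG but the underlying cycle has an undirected cycle, the cycle must contain at least one source and one sink of the DAG; cutting the cycle at such a vertex and pinning down the placement of its neighbors reduces the problem to the path case.

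The main obstacle is justifying the greedy per-vertex decision: showing that if any valid ordering of the component exists, then one is found by the sweep. I plan to establish this through an exchange argument (transforming an arbitrary valid ordering into one produced by the sweep) combined with an inductive structural characterization of valid orderings of a $2$-colored path, exploiting its correspondence with the $1 \times n$ map-folding problem referenced in the introduction. Given this, the overall running time is $O(n)$: linear time to compute the components via BFS/DFS, plus $O(1)$ per vertex during the sweep, plus a final concatenation.
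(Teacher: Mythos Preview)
Your first two steps---the decomposition into alternating-color paths and cycles, and the independence of components---coincide with the paper. The divergence is in step three.

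The paper does not design a direct greedy algorithm. It observes that a path component is exactly an instance of 1D origami flat foldability, and a cycle component is exactly an instance of single-vertex flat foldability: each vertex becomes a unit-length face, each edge becomes a crease, and the mountain/valley label is read off from the edge's color together with whether its orientation agrees with the traversal direction. Both origami problems have published linear-time algorithms, so once the equivalence is verified the proof ends by citation. This is what the introduction's remark about $1\times n$ map folding was foreshadowing; you noticed the connection but used it only as heuristic inspiration for an exchange argument rather than as the reduction itself.

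Your direct route has two soft spots. For paths, inserting each new vertex at the current leftmost or rightmost spine position is essentially the ``simple fold'' heuristic for stamp folding, whereas the standard linear-time 1D-origami algorithm works by repeated local crimping; you would need a real argument that extreme insertion always suffices in the equal-length case, and the exchange sketch you outline does not yet supply one. For cycles, cutting at a source does not reduce to the path problem: both edges incident to that source survive and impose a global constraint linking the two ends of the resulting path, which is precisely why the paper invokes the circular single-vertex flat-foldability problem rather than the linear one. ``Pinning down the placement of its neighbors'' does not discharge that closing constraint, and deciding which source to cut at is itself not addressed.
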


\begin{proof}
In positive instances, $G = (V, E)$ must be 2-page book embeddable and therefore planar~\cite{BK79}.
Hence $|E|=O(n)$.
Every connected component of the undirected version of $G$ must be either a path or a cycle, or else the induced subgraph of the partitions would not be a matching. 
Furthermore, the edges connected in such paths or cycles must alternate in color.
Each connected component can be solved separately since the concatenation of the solution (total order on vertices) of connected components is a solution of the original problem.
Without loss of generality, we consider only the case when $G$ is connected.

We provide a reduction to 1D origami when $G$ is a path and a reduction to single vertex flat foldability if $G$ is a cycle. The reduction runs in $O(n)$ time producing an instance with $n-1$ creases.
Both the flat foldability of 1D origami and single vertex flat foldability can be determined in linear time~\cite{MapFolding,Bern&Hayes}. A \emph{face} in an 1D origami is defined as a segment in the 1D origami and a \emph{crease} is defined as a place where the origami can be folded. A \emph{face} in a single vertex crease pattern is the space between creases. A 1D origami is defined by a line while a single vertex crease pattern is defined by a single vertex where rays originating from the vertex represent creases.

For both the case of the path and the cycle, we create an instance of 1D origami and single vertex flat foldability, respectively, in the following way. For each edge $e\in E$ we create a mountain crease if $e \in \mathtt{Red}$ and a valley crease if $e \in \mathtt{Blue}$. Each face of the produced instance represents a vertex in $G$. The reduction will thus produce an instance where each face of the origami has the same length, which can be viewed as a linkage formed by identical bars.
If $G$ is a path (resp., cycle), the output will be a list (resp., circular list) containing the assignment of the creases on a line segment (resp., a single vertex origami).
Start with one endpoint of the path or with an arbitrary vertex of the cycle.
Traverse the undirected version of $G$ using BFS.
For each edge traversed add \emph{mountain} (resp., \emph{valley}) to the end of the list if the traversed edge corresponds to an edge in $E_1$ with the same direction of the traversal or to an edge in $E_2$ in the opposite direction (resp., corresponds to an edge in $E_2$ with the same direction of the traversal or to an edge in $E_1$ in the opposite direction).
Since every edge is traversed once, the size of the list is $n-1$ (resp., $n$ if a cycle). Thus, the only difference between single vertex crease patterns and 1D origami is that the faces form a cycle as opposed to a line segment, respectively. 

Due to the similarity of the reduction models for paths and cycles, it suffices to show the equivalence between instances when $G$ is a path.
As previously stated, each face of the paper corresponds to a vertex in $G$. Each crease represents an edge in $G$ and whether the crease is a mountain fold or a valley fold in the final state of the origami determines the partition of the edges of $G$ into \texttt{Red} or \texttt{Blue} edges. 
If we consider the starting vertex as the leftmost face of the unfolded paper and that $f_1$ is not flipped in the folded state, a mountain fold puts the adjacent face $f_2$ below $f_1$. 
Without loss of generality, the edge in $G$ that represents the connection between $f_1$ and $f_2$ is in $E_2$ and points from $f_2$ to $f_1$.
By repeating the argument for every edge, we conclude that $G$ represents the above/below relation of faces of the folded state of the 1D origami and $E_1$ (resp., $E_2$) represents the creases that lie right (resp., left) of the folded state (see Figure~\ref{fig:1D-origami}).
Then, it is easy to verify that the origami is flat-foldable iff $(G,\{E_1,E_2\})$ is a positive instance of \textsc{UMPBE}. \qed
\end{proof}

\begin{figure}[h!]
	\centering
	\def\svgwidth{.5\columnwidth}
	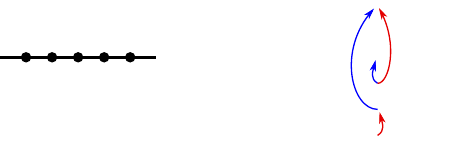
	\caption{A 1D origami crease pattern is shown (left) with mountain/valley labeled as M/V respectively, together with its folded state (center) and the corresponding \textsc{UMPBE}-$2$ instance (right).}
	\label{fig:1D-origami}
\end{figure}

\begin{figure}[h!]
	\centering
	\def\svgwidth{\columnwidth}
	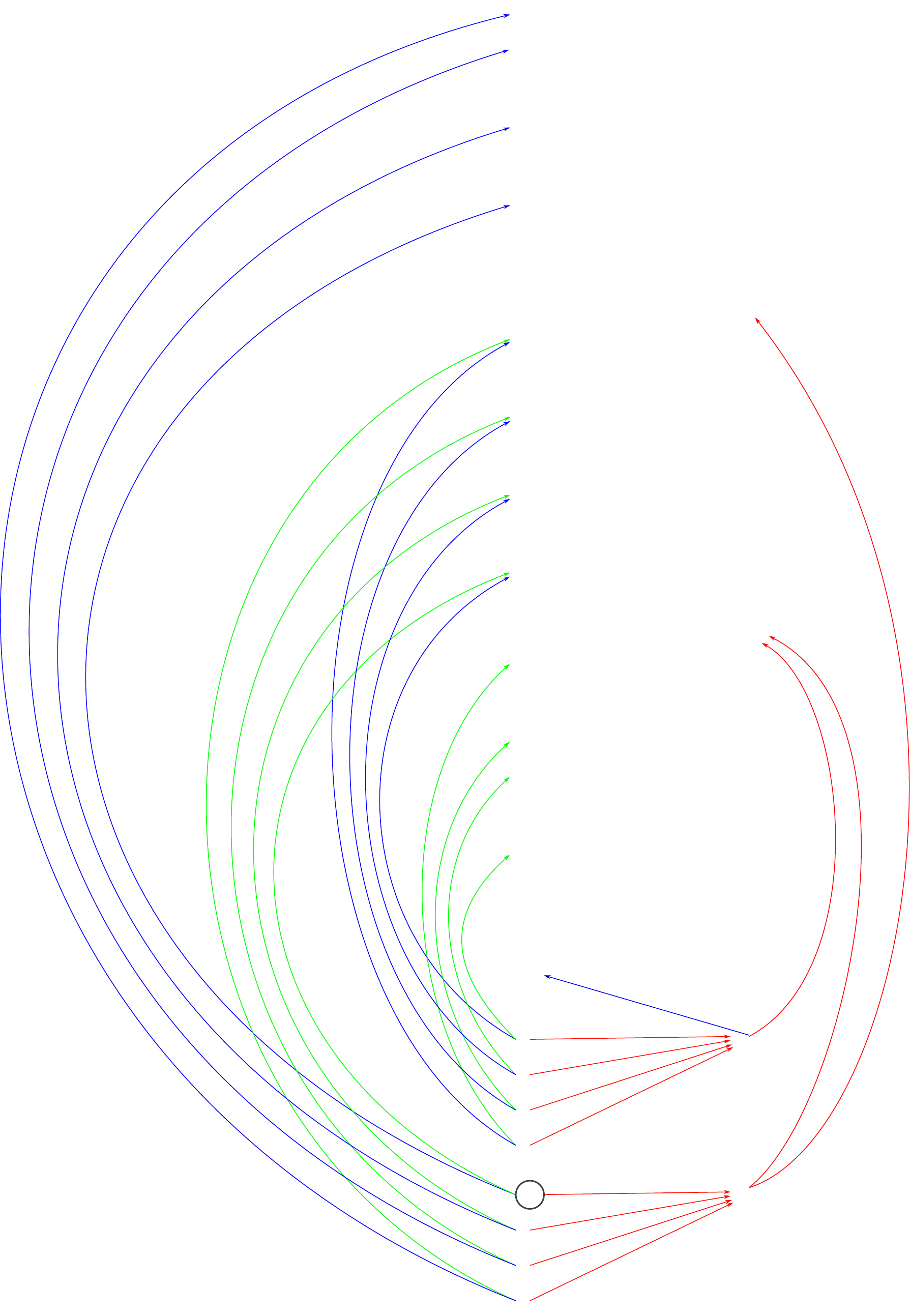
  \caption{Example of a full construction of a reduction. Here, the instance of 					\textsc{Betweenness} is $(L, C)$ where $L=\left\{a, b, c, d\right\}$ and $C=\left\{\langle a, b, c\rangle, \langle b, c, d\rangle, \langle d, b, a\rangle\right\}$.}
	\label{fig:example-full-construction}
\end{figure}

\clearpage
\section*{Acknowledgements}

We thank Jack Edmonds for valuable discussions in August 1997 where he
described how \textsc{Upward Matching-Partitioned $k$-Page Book Embedding}
generalizes the map folding problem.
We also thank Therese Biedl for valuable discussions in 2007
about the complexity this problem.

This research was conducted during the 31st Bellairs Winter Workshop on Computational Geometry which took place in Holetown, Barbados on March 18--25, 2016.
We thank the other participants of the workshop for helpful discussion and for providing a fun and stimulating environment. We also thank our anonymous referees for helpful suggestions in improving the clarity of our paper.

Supported in part by the NSF award CCF-1422311 and Science without Borders. Quanquan Liu is supported in part by NSF GRFP under Grant No. (1122374).

\bibliographystyle{alpha}
\bibliography{ref}
\section*{Appendix}
\magicappendix

\end{document}